\documentclass[12pt,letterpaper]{article}
\usepackage{amsmath}
\usepackage{amssymb}
\usepackage{enumitem}
\usepackage{mathtools}
\usepackage{amsthm}
\usepackage[utf8]{inputenc}
\usepackage[mathscr]{euscript}
\usepackage{pst-node}
\usepackage{stackengine}
\usepackage{fullpage}
\usepackage[semicolon,authoryear]{natbib}
\usepackage[colorlinks=true,linkcolor=blue,citecolor=blue]{hyperref}
\usepackage{booktabs,tabularx}

\usepackage{tikz-cd}
\usepackage{bm}
\usepackage{geometry}
\usepackage[T1]{fontenc}
\usepackage{titling}
\usepackage{threeparttable}

\usepackage{tcolorbox}
\usepackage{listings}
\usepackage{longtable}

\usepackage{xargs}
\usepackage[colorinlistoftodos,prependcaption,textsize=tiny]{todonotes}
\newcommandx{\sgcmt}[2][1=]{\todo[linecolor=red,backgroundcolor=red!25,bordercolor=red,#1]{#2}}

\definecolor{DarkGreen}{RGB}{30,180,30}

\newif\ifbackmatter%
\newcommand{\backmatter}{\global\backmattertrue}%

\usepackage[title]{appendix}%

\theoremstyle{definition}

\theoremstyle{remark}

\newtheorem*{claim}{{\bf \emph{Lemma}}}
\newtheorem*{remark}{Remark}

\tikzcdset{
arrow style=tikz,
diagrams={>={Straight Barb[scale=0.8]}}
}

\DeclareMathOperator\sym{sym}

\newcommand{\interior}[1]{%
  {\kern0pt#1}^{\mathrm{o}}%
}

\title{Magnetic influence on ion transport in concentrated solid solutions: An analytic investigation}
\author{Timothy Carlson \& Sanjay Govindjee\\
University of California, Berkeley\\
\texttt{timothy.carlson@berkeley.edu, s\_{}g@berkeley.edu}}
\begin{document}

\maketitle

\begin{abstract}

It is well established that magnetic fields have a significant effect on transport in certain classes of electronic conductors. Less reported, however, are similar effects in solid ionic conductors. Despite the rarity of Hall mobility measurements in ionic conductors,
%significant or otherwise,
recent experimental work
in batteries and other systems
has demonstrated that an applied magnetic field can significantly and beneficially alter ionic transport and electrochemical processes in solid materials in a way that would not be predicted from na\"{\i}ve Hall coefficient estimates. In this work, the influence of a magnetic field on ion transport in solids is investigated analytically, and general multi-component transport equations accounting for magnetic effects are presented. Specific models are then derived for solid, isotropic binary and single ion conductors. Material property combinations for which magnetic field influence may become significant are then computed for certain systems subject to compositional constraints.
%such as site constrained lattices or materials with constant composition. 
Finally, it is demonstrated that the derived model for binary conductors in a magnetic field fits experimental magneto-resistance data well for the fluoride ion conducting solid Pb$_{0.66}$Cd$_{0.34}$F$_2$, provided an assumption of near degenerate multi-component transport. %Further insights into auxiliary implications of the near degenerate conduction model are also assessed through Finite Element Simulation. 
    
\end{abstract}

\tableofcontents

\section{Introduction}\label{a:c}

For many materials, it is well established that magnetic fields can significantly influence the conduction current (see, e.g., \cite{Mansfield:78} and \cite{DESIRE:24}). However, the same phenomena are rarely explored in the context of solid ionic conductors given that charge carriers are classical particles that often conduct at far smaller speeds at very high concentrations. Therefore, the Hall coefficient, when computed as $R_h=(Fcz)^{-1}$, where $F$ is Faraday's constant, $z$ is the charge of the carrier, and $c$ is the mobile carrier concentration, is typically negligible in solid ion conductors \citep[see, e.g.,][]{Carvalho:22}. Despite this, the transport of ions in solids has been shown to be sensitive to magnetic field effects in a way that significantly alters the performance of, for example, an all solid state battery. Specifically, increased ion conductivity is reported in garnet type LLZO solid electrolytes by \cite{zhang:22} when exposed to a 960\,mT magnetic field in the direction of the applied current, in addition to significantly improved cyclability of Li/LLZO/Li symmetric cells. More demonstrative magnetic field effects are obtained by \cite{Kim:25}, where measured effective conductivity is improved by over 100 percent for a viscoelastic, ionogel solid electrolyte.  Despite these observations, we know of no published rigorous analytical theories that explain them.  At best, one finds conjectures regarding magneto-hydrodynamic effects which are either
impossible or improbable in the systems under consideration.

The thermodynamics of multi-component transport in concentrated solutions have been discussed at length in the literature (see, e.g., \cite{Onsager:31} and \cite{nb:21}), and the general process of deriving useful transport equations in such systems is well established \citep[see, e.g.,][]{LI:21}. However, much of the existing literature neglects couplings to the magnetic field and, when such couplings are discussed, useful expressions relating the flux of species to driving forces are stated only in very general terms \citep[see, e.g.,][]{MAZUR:53}. Previous, related, analytical work focused on constitutive relations for species fluxes in the presence of a magnetic field has mostly been centered on ionized gas or plasma in the absence of external driving forces (see, e.g., \citet[][pp.199-201]{Balakrishnan:2021} and \cite{Kinouchi:1988}). Magnetic influence on conduction current in semi-conductors has of course been treated extensively \citep[see, e.g.,][]{Dill:25}, but species interaction is rarely considered in these cases and more pronounced magnetoresistive effects are typically better described through the use of alternative models \citep[see, e.g.,][]{christensen:24}. Similar attempts at stating general transport equations from the derived Onsager relations have also been explored \citep[see, e.g.,][]{Ayansiji:20}, but models are typically limited to a single species or are thermodynamically incompatible with more general multi-component systems. 

Multi-component transport with nontrivial driving electromagnetic fields has also been treated in the context of ionized plasma \citep[see, e.g.,][]{BRAGINSKII:65}, but transport equations are not clearly stated for a modern audience\footnote{For example, CGS units are used for magnetic fields, and the force flux relations are not clearly stated or derived in the more general cases.}. In particular, these treatments are not immediately applicable to the transport of ions in solids influenced by a magnetic field \citep[see, e.g.,][]{CARLSON:26}. Finally, previous works (see, e.g., \cite{costa:21} and \cite{Luo:22}) have summarized the physics of
magnetic fields on ion transport, 
but such attempts have not 
looked at the mathematical structure of the conductivity tensor in a general setting.
As such, a complete, analytical, thermodynamically consistent treatment of magnetic influence on transport in concentrated, multi-component solid solutions is seemingly absent from the literature.

In this work, we generalize and refine the transport relations shown in an earlier study \citep{CARLSON:26}, utilizing concentrated solution theory via the Onsager-Stefan-Maxwell formalism as presented by \cite{nb:21}, to determine regimes where magnetic influence on the transport of ions in solids may be non-negligible. In doing so, the full Lorentz force on each charged particle must be accounted for, which can be computed in a straight-forward way from the Langevin equation for each carrier \citep[see, e.g., ][pp.199-201]{Balakrishnan:2021}. From this, useful expressions relating to the flux of each species can be derived. After stating the general equations, three prototypical, isotropic cases are considered: a single ion conductor, a single ion conductor with a secondary neutral species, and a binary solid electrolyte. For the second and third case, specific analytic examples are considered where a single effective conductivity tensor can be derived: that of a site-constrained single ion conductor (ions and neutral vacancies) and a binary ionic conductor with constrained composition. The second case is potentially applicable to certain classes of Garnet-like solid electrolytes where vacancies can conceivably be modeled as a charged, secondary species \citep[see, e.g.,][]{bai:25}. Under this assumption, it is justifiable to assume that $\mu_1=-\mu_2$ for $\mu_i$ defined as the part of the chemical potential associated with the activity of species $i$ only. Finally, the example of constrained composition is applied to the super fluoride ion conductor Pb$_{0.66}$Cd$_{0.34}$F$_2$. Subsequent analysis demonstrates how the model may serve as an explanation of observed magneto-resistance and deviation from the typical Hall effect-derived conduction model, provided the acceptance of an assumption of near degenerate, multi-component transport.

\section{Theoretical preliminaries}

Some important theoretical results relating to species transport in a continuum when electromagnetic fields are first stated. More detail can be found in \cite{CARLSON:26} and \cite{kovetz:00}. Of primary importance to this work are the electric field $\bm{e}$ [V/m] and magnetic field $\bm{b}$ [T];\footnote{Often, $\bm{b}$ is called the magnetic flux density, and the current potential $\bm{h}$ is referred to as the magnetic field. These names are commonly used interchangeably given the electromagnetic duality/aether relations \citep[see, e.g.,][]{kovetz:00}.} indeed, the electromagnetic duality equations ensure that Maxwell's equations can be satisfied through knowledge of $\bm{b}$ and $\bm{e}$ alone. For each of these fields, there exist potentials, $\bm{a}$ and $\phi$, such that
\begin{align}
    \bm{b}=\nabla\times \bm{a}, \quad\text{and}\quad \bm{e}=-\bm{a}_{,t}-\nabla\phi.\label{eq:emagpots}
\end{align}
However, it is straightforward to show using a space-time deformation at constant velocity that $\bm{e}$ and the potential ${\phi}$ are not classical-observer invariant, although $\bm{b}$ and $\bm{a}$ are. Letting $\bm{v}$ denote the velocity of
moving charged matter
%a point $p$ 
relative to a rest/relativistic inertial frame, Galilean invariant electric fields and electric potentials can be constructed \citep[see, e.g.,][]{kovetz:00}. These invariant quantities, given by
\begin{align}
    \mathbb{E}_s=\bm{e}+\bm{v}\times\bm{b},&& \overline{\phi}=\phi-\bm{v}\cdot\bm{a}\label{eq:polarizer}
\end{align}
represent the fields and potentials in a co-moving frame; heuristically, these are the fields that are "felt" by the moving
material 
%located at $p$
, and are therefore the fields that influence electrochemical processes, for example. It is then possible to show that $\mathbb{E}_s$ can be written in terms of the potentials as
\begin{align*}
    \mathbb{E}_s=-\mathcal{L}_{\bm{v}}\bm{a}-\nabla\overline{\phi}
\end{align*}
where $\mathcal{L}_{\bm{v}}\bm{a}=\dot{\bm{a}}+(\nabla\bm{v})^T\bm{a}$ denotes the Lie derivative of $\bm{a}$ along $\bm{v}$.

Using the first and second laws of thermodynamics, it is then possible to derive constitutive relations for material responses to, for example, $\mathbb{E}_s$, $\bm{b}$, $\overline{\phi}$, and species concentrations $c_i$ [mol/m$^3$] by supposing the existence of a specific free energy function $\varphi$ [J/kg]. These constitutive relations then enable the definition of the free charge conduction current $\mathbb{J}_s$ [C/m$^2$s], species flux $\bm{j}_i$ [mol/m$^2$s], and electro\emph{magnetic}-chemical potential $\mu_i^{em}$ [J/mol]. Relevant to this work are the following results
%of this derivation 
\citep[see, e.g.,][]{CARLSON:26}:
\begin{align}
    \mu_i^{em}&=\rho\frac{\partial\varphi}{\partial c_i}\label{eq:chempot1}\\
    \begin{bmatrix}
        \mathbb{J}_s\\
        \bm{j}_i \\  
    \end{bmatrix} &= -\begin{bmatrix}
        \bm{\kappa} & \mathscr{C}_{\mathbb{J}_s\mu_{i}} \\
        \mathscr{C}_{\bm{j}_i\mathbb{E}_s} & \bm{M}_i  \\
    \end{bmatrix} \begin{bmatrix}
        \mathcal{L}_{\bm{v}}\bm{a} \\
        \nabla \mu_i^{em} \\
    \end{bmatrix}\label{eq:cup1},
\end{align}
where $\rho$ is the density of the material [kg/m$^3$], $\bm{\kappa}$ is the conductivity [m/N\,s], $\bm{M}_i$ denote partial mobilities, and the $\mathscr{C}$ denote electromagnetic-chemical cross couplings. Note that the electrochemical potential is typically considered to be additively decomposable into an electric part and a chemical part \cite[see e.g.,][Ch. 3]{nb:21}. However, in the presence of a magnetic field, the electric part becomes electromagnetic to preserve invariance.
In order to correctly represent the potential experienced by a given species, we define an electromagnetic-chemical potential
\begin{align}
    \mu_i^{em}=\mu_i+Fz_i\overline{\phi}\label{eq:echemdef},
\end{align}
where $F$ is Faraday's constant, $\mu_i$ is the purely chemical potential, and $z_i$ is the charge associated species $i$. Observe in a neutral species, one has $\mu_i^{em}=\mu_i$. 

When establishing the precise forms of the tensors in \eqref{eq:cup1}, it is useful to note the relation $\mathbb{J}_s=\sum_{i=1}^nFz_i\bm{j}_{i}$, where $n$ is the number of mobile species. This allows one to collapse the transport equations above into relationships involving only the species fluxes. These transport equations are then representative 
%for equations of motion for 
of the
the average velocity of a species given some driving forces; consequently, specification of particular forms for the transport tensors relevant to a given material is often more easily accomplished by first constructing the inverse relation, as those tensors are often more easily measured/computed. This is the starting point of the next section.

\section{Generalized transport in an electromagnetic field}\label{s:2}

Given the short relaxation times and large number of collisions expected for particles migrating through a solid lattice or viscoelastic gel, balancing the average force on mobile particles 
\begin{align*}
   \bm{f}^\mathrm{em}_i + \bm{f}^\mathrm{chem}_i + \sum_{j=1}^n \bm{M}'_{ij}\bm{u}_j = \bm{0},
\end{align*}
where $c_i$ [mol/m$^3$] is a measure of concentration,\footnote{In practice, it may sometimes be necessary to use a "mobile" carrier concentration rather than the total species concentration \citep[see, e.g.,][]{thompson:15}.}
$\bm{f}_i^\mathrm{em}$ is the electromagnetic Lorentz force on species $i$, 
$\bm{f}_i^\mathrm{chem} = c_i\nabla\mu_i$ is the chemical driving force on species $i$,
$\bm{u}_j$ is the average velocity of species $j$ relative to the host solid, and $\bm{M}_{ij}'$ are modified friction or interaction tensors \citep[see, e.g.,][p, 259]{nb:21}, which are proportional to blocks of the inverse of the transport relations given in \eqref{eq:cup1}.\footnote{The $\bm{M}_{ij}'$ tensors are usually given in terms of diffusion coefficients $\mathscr{D}_{ij}$ and, in the isotropic case, are written $M'_{ij}=\frac{RTc_ic_j}{c_T\mathscr{D}_{ij}}$ for $i\neq j$ and $M'_{ij}=-\sum_{j}\frac{RTc_ic_j}{c_T\mathscr{D}_{ij}}$ otherwise, where $c_T$ is the total concentration of the solution.}

In the presence of an electromagnetic field, the  Lorentz force on the particle relative to the host lattice must be accounted for when computing the force on a particle distribution. This is given by \citep[see, e.g.,][]{kovetz:00}
\begin{align*}
 \bm{f}_i^\mathrm{em}=   Fz_ic_i(\bm{e}+(\bm{v}+\bm{u}_i)\times\bm{b})=Fz_ic_i(\mathbb{E}_s+\bm{u}_i\times\bm{b}).
\end{align*}
Summing this with the chemical force then leads to the extended equation
\begin{align}
    c_i\left(-Fz_i\bm{u}_i
    \times\bm{b}+Fz_i\mathscr{L}_{\bm{v}}\bm{a}+\nabla\mu_i^{em}\right)=\sum_{j=1}^n \bm{M}'_{ij}\bm{u}_j,\label{eq:intdif}
\end{align}
where it is noted that the electric potential is absorbed into the electromagnetic-chemical potential.
It is then useful to rewrite this equation in terms of the relative molar flux $\bm{j}_i=c_i\bm{u}_i$ [mol/m$^2$\,s] as
\begin{align}
    Fz_i\mathscr{L}_{\bm{v}}\bm{a}+\nabla\mu_i^{em}=\sum_{j=1}^n \overline{\bm{L}}_{ij}\bm{j}_j,\label{eq:intdif2}
\end{align}
%given that we must solve for the fluxes, 
where
\begin{align}
    \overline{\bm{L}}_{ij}=\begin{cases}
        \frac{\bm{M}'_{ij}}{c_ic_j}+\frac{Fz_i}{c_i}\;\bm{\epsilon}\cdot\bm{b}&\text{if}\quad i=j\\[4pt]
        \frac{\bm{M}_{ij}'}{c_ic_j}&\text{otherwise},
    \end{cases}
\end{align}
and $\bm{\epsilon}$ is the third order Levi-Civita tensor. 
%This system of equations can then be solved alongside Gauss's law and the Maxwell-Amp\`ere equation to solve for the electromagnetic-chemical potentials, electrostatic potential, and magnetic potential of the system \citep[see, e.g.,][]{CARLSON:26}. \sgcmt{This last sentence seems un-needed for this paper.}

%In the most general case, $N$ mobile charged species may interact with $M$ uncharged species, and so the system can be written in matrix form as
%\begin{align}
%    \begin{bmatrix}
%        Fz_1\mathscr{L}_{\bm{v}}\bm{a}+\nabla\mu_1^{em}\\
%        ...\\
%        Fz_N\mathscr{L}_{\bm{v}}\bm{a}+\nabla\tilde{\mu}^e_N\\
%        \nabla\tilde{\mu}^e_{N+1}\\
%        ...\\
%        \nabla\tilde{\mu}^e_{N+M}
%    \end{bmatrix} = \begin{bmatrix}
%        \overline{L}_{11} & ... & \overline{L}_{1N} & \overline{L}_{1(N+1)} & ... & \overline{L}_{1(N+M)}\\
%        ... & ... & ... & ... & ... & ... \\
%        \overline{L}_{N1} & ... & \overline{L}_{NN} & \overline{L}_{N(N+1)} & ... & \overline{L}_{N(N+M)}\\
%        \overline{L}_{(N+1)1} & ... & \overline{L}_{(N+1)N} & \overline{L}_{(N+1)(N+1)} & ... & \overline{L}_{(N+1)(N+M)}\\
%        ... & ... & ... & ... & ... & ... \\
%        \overline{L}_{(N+M)1} & ... & \overline{L}_{(N+M)N} & \overline{L}_{(N+M)(N+1)} & ... & \overline{L}_{(N+M)(N+M)}
%    \end{bmatrix}\begin{bmatrix}
%        \bm{j}_1\\
%        ...\\
%        \bm{j}_N\\
%        \bm{j}_{N+1}\\
%        ...\\
%        \bm{j}_{N+M}
%    \end{bmatrix}
%\end{align}
%after re-indexing so that the charged species appear first. 

\begin{remark}
If each $\bm{M}'_{ii}$ is invertible, then each $\overline{\bm{L}}_{ii}$ is also. While tedious, it is straightforward to show using the Sherman–Morrison–Woodbury relation \citep[see e.g.,][]{hager89} that for a material that exhibits time-reversal symmetry in the absence of magnetic influence (i.e. $\bm{M}_{ii}'$ is symmetric), we must have 
\begin{align}
    \overline{\bm{L}}_{ii}^{-1}=\frac{c_i^2\bm{M}_{ii}'^{-1}}{1+F^2z_i^2c_i^2(\bm{M}_{ii}':\bm{b\otimes\bm{b}})/(\det(\bm{M}_{ii}'))}\left[\bm{I}-Fz_ic_i(\bm{\epsilon}\cdot\bm{b})\bm{M}_{ii}'^{-1}+\frac{F^2z_i^2c_i^2}{\det(\bm{M}_{ii}')}\bm{M}_{ii}'\bm{b}\otimes\bm{b}\right].
\end{align}
Thus, the magnetic field breaks so-called time reversal symmetry. Further, in the most general case $\bm{M}'_{ii}$ need only be positive definite, so one can replace $\bm{M}'_{ii}$ with $\sym(\bm{M}'_{ii})$ and $\bm{b}$ with $\bm{b}+\bm{\omega}$ for the most general possible expression, where $\bm{\omega}$ is defined such that $\bm{\epsilon}\cdot(Fz_ic_i\bm{\omega})$ is the antisymmetric part of $\bm{M}'_{ii}$. Thus, the diagonal blocks are still invertible via the formula given above for any second-law satisfying material. However, the mathematical simplifications of the next sections, which are made possible by the algebraic properties of the set $\{\bm{I},\bm{\epsilon}\cdot\bm{b},\bm{b}\otimes\bm{b}\}$, cannot be utilized in the general case. 

It should also be noted that magnetic influence on the interaction tensors $\bm{M}'_{ij}$ themselves has not been considered, as the magnetic field would have to meaningfully alter momentum exchange from collisions to influence this parameter. Although possible in principle, the magnetic field strength required to meaningfully alter the diffusivity tensor in liquids has been shown to be similar in magnitude to the field produced by a neutron star \citep[see e.g,][]{Kinouchi:1988}.
\end{remark}

\section{Isotropic conduction models}\label{s:3}

In this section, we cover three specific models of general interest: a true single carrier conductor, a single ion conductor with a mobile secondary species, and a binary conductor. Using the methodology presented in the derivation of each case and the theory presented in the previous sections, more general models can be derived, such as the case of a binary electrolyte with a tertiary, mobile neutral species. 

In each of the following derivations, the transport tensors are assumed to be isotropic. Additionally, each of the three lemmas in Appendix \ref{a:a} will be utilized heavily throughout the remainder of this section.

\subsection{Single ion conductor}\label{s:3.1}

For a single, non-interacting charged species in an isotropic medium, \eqref{eq:intdif2} simplifies to
\begin{align}
    c\left(-Fz\bm{u}\times\bm{b}+Fz\mathscr{L}_{\bm{v}}\bm{a}+\nabla{\mu}^{em}\right)= M'\bm{u}.
\end{align}
Further, for such a material it is also the case \citep[see e.g.,][p. 260]{nb:21} that $M'=-F^2c^2z^2\rho$, where $\rho$ is the material's electric resistivity, and so
\begin{align}
   Fz\mathscr{L}_{\bm{v}}\bm{a}+\nabla{\mu}^{em}= -(F^2z^2c\rho)\bm{u}+ Fz\bm{u}\times\bm{b}=-Fz\rho\mathbb{J}_s+\frac{1}{c}\mathbb{J}_s\times \bm{b},
\end{align}
after noting that the average conduction current is defined as $\mathbb{J}_s=Fcz\bm{u}$. Solving for $\mathbb{J}_s$ then leads to the standard \citep[see e.g.,][]{chien:80} expression for conductivity accounting for the Hall effect, 
\begin{align}
    \mathbb{J}_s&=-\frac{\rho}{\rho^2+R_h^2\bm{b}\cdot\bm{b}}\left[\bm{I}+\frac{R_h}{\rho}\bm{\epsilon}\cdot\bm{b}+\frac{R_h^2}{\rho^2}\bm{b}\otimes\bm{b}\right]\left(\mathscr{L}_{\bm{v}}\bm{a}+\frac{1}{Fz}\nabla{\mu}^{em}\right),\label{eq:SIC}
\end{align}
where $R_h=\frac{1}{Fcz}$. Note that the carrier concentration in many materials is large enough that magneto-resistance can be neglected. 

\subsection{Single ion conductor with a secondary mobile species}

In the case of a material with one mobile charged species but two mobile, interacting species, it follows from \eqref{eq:intdif} that
\begin{align}
    \begin{bmatrix}
        Fz_1\mathscr{L}_{\bm{v}}\bm{a}+\nabla\mu_1^{em}\\
        \nabla\mu_2
    \end{bmatrix}=\begin{bmatrix}
        \bm{M}_{11}'/c_1+Fz_1\bm{\epsilon}\cdot\bm{b} & \bm{M}_{12}'/c_1\\
        \bm{M}_{21}'/c_2 & \bm{M}_{22}'/c_2
    \end{bmatrix}\begin{bmatrix}
        \bm{{u}}_1\\\bm{{u}}_2
    \end{bmatrix}.
\end{align}
Note that we consider $\mu_2$ here given that $z_{2}=0$. Then, from the definition of the relative molar flux $\bm{j}_i=c_i\bm{{u}}_i$, we have
\begin{align*}
    \begin{bmatrix}
        Fz_1\mathscr{L}_{\bm{v}}\bm{a}+\nabla\mu_1^{em}\\
        \nabla\mu_2
    \end{bmatrix}&=\begin{bmatrix}
        \bm{M}_{11}'/c_1^2+\frac{Fz_1}{c_1}\bm{\epsilon}\cdot\bm{b} & \bm{M}_{12}'/c_1c_2\\
        \bm{M}_{21}'/c_1c_2 & \bm{M}_{22}'/c_2^2
    \end{bmatrix}\begin{bmatrix}
        \bm{j}_1\\\bm{j}_2
    \end{bmatrix}\\&=-\begin{bmatrix}
        Fz_1\bm{I} & \bm{0}\\
        \bm{0} &F\bm{I}
    \end{bmatrix}\begin{bmatrix}
        \rho_{11}\bm{I}-R_h^{(1)}\bm{\epsilon}\cdot\bm{b} & \rho_{12}\bm{I}\\
        \rho_{12}\bm{I} & \rho_{22}\bm{I}
    \end{bmatrix}\begin{bmatrix}
        Fz_1\bm{I} & \bm{0}\\
        \bm{0} &F\bm{I}
    \end{bmatrix}\begin{bmatrix}
        \bm{j}_1\\\bm{j}_2
    \end{bmatrix}\\
    &=-\bm{Z}\bm{\rho}\bm{Z}\begin{bmatrix}
        \bm{j}_1\\\bm{j}_2
    \end{bmatrix}\,,
\end{align*}
where the second relation holds for materials that can be assumed to be isotropic conductors. Here, $\rho_{11}=-M'_{11}/F^2c_1^2z_1^2$, $\rho_{12}=\rho_{21}=M'_{12}/F^2c_1c_2z_1$, and $\rho_{22}=-M'_{22}/F^2c_2^2$ are partial resistivies corresponding to each relative transport mechanism and $R_h^{(1)}=\frac{1}{Fz_1c_1}$ is the partial Hall coefficient representing the individual interaction of the charged species with an external magnetic field. It then follows from Lemma 3 (Appendix \ref{a:a}) that the conductivity $\bm{\kappa}=\bm{\rho}^{-1}$ is a block symmetric matrix, where each block can be written explicitly as
\begin{align}
    \bm{\kappa}_{11}&=\frac{\rho_{22}\kappa_D}{1+(\kappa_D\rho_{22}R_h^{(1)})^2(\bm{b}\cdot\bm{b})}\left[\bm{I}+\kappa_D\rho_{22}R_h^{(1)}\bm{\epsilon}\cdot\bm{b}+(\kappa_D\rho_{22}R_h^{(1)})^2\bm{b}\otimes\bm{b}\right]\label{eq:Kc11}\\
    \bm{\kappa}_{12}&=\frac{-\rho_{12}\kappa_D}{1+(\kappa_D\rho_{22}R_h^{(1)})^2(\bm{b}\cdot\bm{b})}\left[\bm{I}+\kappa_D\rho_{22}R_h^{(1)}\bm{\epsilon}\cdot\bm{b}+(\kappa_D\rho_{22}R_h^{(1)})^2\bm{b}\otimes\bm{b}\right]\label{eq:Kc12}\\
    \bm{\kappa}_{22}&=\frac{\rho_m\kappa_D}{1+(\kappa_D\rho_{22}R_h^{(1)})^2(\bm{b}\cdot\bm{b})}\left[\bm{I}+\kappa_D\rho_{22}R_h^{(1)}\bm{\epsilon}\cdot\bm{b}+(\kappa_D\rho_{22}R_h^{(1)})^2\bm{b}\otimes\bm{b}\right]+\frac{1}{\rho_{22}}\bm{I}\label{eq:Kc22}
\end{align}
for
\begin{align*}
    \rho_m=\frac{\rho_{12}^2}{\rho_{22}},\qquad\kappa_D=(\rho_{11}\rho_{22}-\rho_{12}^2)^{-1}.
\end{align*}
Therefore, we have
\begin{align}
    \begin{bmatrix}
        \mathbb{J}_s\\[4pt] F\bm{j}_2
    \end{bmatrix}=-\bm{\kappa}\begin{bmatrix}
        \mathscr{L}_{\bm{v}}\bm{a}+\frac{1}{z_1F}\nabla\mu_1^{em}\\[4pt]
        \frac{1}{F}\nabla\mu_2
    \end{bmatrix}\label{eq:SIC2}
\end{align}
since the conduction current $\mathbb{J}_s$ is directly proportional to  the flux of the single charged species via $\mathbb{J}_s=Fz_1\bm{j}_1$. 

\subsection{Binary conductor}\label{s:3.3}

In the case of a material with two mobile, charged, interacting species, it follows from \eqref{eq:intdif} that
\begin{align}
    \begin{bmatrix}
        Fz_1\mathscr{L}_{\bm{v}}\bm{a}+\nabla\mu_1^{em}\\
        Fz_2\mathscr{L}_{\bm{v}}\bm{a}+\nabla\mu_2^{em}
    \end{bmatrix}=\begin{bmatrix}
        \bm{M}_{11}'/c_1+Fz_1\bm{\epsilon}\cdot\bm{b} & \bm{M}_{12}'/c_1\\
        \bm{M}_{21}'/c_2 & \bm{M}_{22}'/c_2+Fz_2\bm{\epsilon}\cdot\bm{b}
    \end{bmatrix}\begin{bmatrix}
        \bm{{u}}_1\\\bm{{u}}_2
    \end{bmatrix}.
\end{align}
Then, from the definition of the relative molar flux $\bm{j}_i=c_i\bm{{u}}_i$, we have
\begin{align*}
    \begin{bmatrix}
        Fz_1\mathscr{L}_{\bm{v}}\bm{a}+\nabla\mu_1^{em}\\
        Fz_2\mathscr{L}_{\bm{v}}\bm{a}+\nabla\mu_2^{em}
    \end{bmatrix}&=\begin{bmatrix}
        \bm{M}_{11}'/c_1^2+\frac{Fz_1}{c_1}\bm{\epsilon}\cdot\bm{b} & \bm{M}_{12}'/c_1c_2\\
        \bm{M}_{21}'/c_1c_2 & \bm{M}_{22}'/c_2^2+\frac{Fz_2}{ c_2}\bm{\epsilon}\cdot\bm{b}
    \end{bmatrix}\begin{bmatrix}
        \bm{j}_1\\\bm{j}_2
    \end{bmatrix}\\&=-\begin{bmatrix}
        Fz_1\bm{I} & \bm{0}\\
        \bm{0} &Fz_2\bm{I}
    \end{bmatrix}\begin{bmatrix}
        \rho_{11}\bm{I}-R_h^{(1)}\bm{\epsilon}\cdot\bm{b} & \rho_{12}\bm{I}\\
        \rho_{12}\bm{I} & \rho_{22}\bm{I}-R_h^{(2)}\bm{\epsilon}\cdot\bm{b}
    \end{bmatrix}\begin{bmatrix}
        Fz_1\bm{I} & \bm{0}\\
        \bm{0} &Fz_2\bm{I}
    \end{bmatrix}\begin{bmatrix}
        \bm{j}_1\\\bm{j}_2
    \end{bmatrix}\\
    &=-\bm{Z}\bm{\rho}\bm{Z}\begin{bmatrix}
        \bm{j}_1\\\bm{j}_2
    \end{bmatrix}
\end{align*}
where the second relation holds for materials that can be assumed to be isotropic conductors. Here, $\rho_{11}=-M'_{11}/F^2c_1^2z_1^2$, $\rho_{12}=\rho_{21}=-M'_{12}/F^2c_1c_2z_1z_2$, and $\rho_{22}=-M'_{22}/F^2c_2^2z_2^2$ are partial resistivies corresponding to each relative transport mechanism and $R_h^{(1)}=\frac{1}{Fz_1c_1}$ and $R_h^{(2)}=\frac{1}{Fz_2c_2}$ are partial Hall coefficients representing the individual interaction of each species with an external magnetic field.

It then follows from Lemmas 1, 2, and 3 (Appendix \ref{a:a}) that $\bm{\kappa}=\bm{\rho}^{-1}$ is in this case given by
\begin{align*}
    \bm{\kappa}=\begin{bmatrix}
        \bm{D}_\kappa^{-1}(\rho_{22}\bm{I}-R_h^{(2)}\bm{\epsilon}\cdot\bm{b}) & -\rho_{12}\bm{D}_\kappa^{-1}\\
        -\rho_{12}\bm{D}_\kappa^{-1}& \bm{D}_\kappa^{-1}(\rho_{11}\bm{I}-R_h^{(1)}\bm{\epsilon}\cdot\bm{b})
    \end{bmatrix},
\end{align*}
for
\begin{align*}
    \bm{D}_\kappa&=(\rho_{11}\bm{I}-R_h^{(1)}\bm{\epsilon}\cdot\bm{b})(\rho_{22}\bm{I}-R_h^{(2)}\bm{\epsilon}\cdot\bm{b})-\rho_{12}^2\bm{I}\\
    &=[\rho_{11}\rho_{22}-\rho_{12}^2-R_h^{(1)}R_h^{(2)}(\bm{b}\cdot\bm{b})]\bm{I}-(\rho_{11}R_h^{(2)}+\rho_{22}R_h^{(1)})\bm{\epsilon}\cdot\bm{b}+R_h^{(1)}R_h^{(2)}\bm{b}\otimes\bm{b}.
\end{align*}
Straightforward calculation then yields
\begin{align*}
    \bm{D}_\kappa^{-1}=\frac{\kappa_D}{1+\kappa_D^2R_D^2(\bm{b}\cdot\bm{b})}\left[\bm{I}+\kappa_DR_D\bm{\epsilon\cdot\bm{b}}+\frac{1}{\rho_D}\left(\kappa_DR_D^2-R_\delta\right)\bm{b}\otimes\bm{b}\right]
\end{align*}
for
\begin{align*}
    R_D=\rho_{11}R_h^{(2)}+\rho_{22}R_h^{(1)},\qquad
    \rho_D=\rho_{11}\rho_{22}-\rho_{12}^2,\qquad
    R_\delta&=R_h^{(1)}R_h^{(2)},\qquad \kappa_D=(\rho_D-R_\delta(\bm{b}\cdot\bm{b}))^{-1}.
\end{align*}

In this case, both species carry a charge and therefore contribute to the conduction current. From the definition of the charge tensor $\bm{Z}$, it follows that $\bm{Z}[\bm{j_1}\quad\bm{j_2}]^T=[\bm{i}_1\quad\bm{i}_2]^T$, where $\bm{i}_1$ and $\bm{i}_2$ are the partial contributions to the current from species 1 and 2, respectively, such that $\mathbb{J}_s=\bm{i}_1+\bm{i}_2$. Therefore, we can write
\begin{align}
    \begin{bmatrix}
        \bm{i}_1\\\bm{i}_2
    \end{bmatrix}&=-\begin{bmatrix}
        \bm{D}_\kappa^{-1}(\rho_{22}\bm{I}-R_h^{(2)}\bm{\epsilon}\cdot\bm{b}) & -\rho_{12}\bm{D}_\kappa^{-1}\\
        -\rho_{12}\bm{D}_\kappa^{-1}& \bm{D}_\kappa^{-1}(\rho_{11}\bm{I}-R_h^{(1)}\bm{\epsilon}\cdot\bm{b})
    \end{bmatrix}\begin{bmatrix}
        \mathscr{L}_{\bm{v}}\bm{a}+\frac{1}{Fz_1}\nabla\mu_1^{em}\\
        \mathscr{L}_{\bm{v}}\bm{a}+\frac{1}{Fz_2}\nabla\mu_2^{em}
    \end{bmatrix}.
\end{align}
Carrying out the tensor multiplication leads to the final expression
\begin{align}
    \mathbb{J}_s&=-\bm{\kappa}_1\left(\mathscr{L}_{\bm{v}}\bm{a}+\frac{1}{Fz_1}\nabla\mu_1^{em}\right)-\bm{\kappa}_{2}\left(\mathscr{L}_{\bm{v}}\bm{a}+\frac{1}{Fz_2}\nabla\mu_2^{em}\right),\label{eq:binary}
\end{align}
for
\begin{align*}
    \bm{\kappa_1}&=\frac{\kappa_D}{1+\kappa_D^2R_D^2(\bm{b}\cdot\bm{b})}\left[[\rho_{22}-\rho_{12}+(R^{(2)}_h\kappa_D R_D)\bm{b}\cdot\bm{b}]\bm{I}+[(\rho_{22}-\rho_{12})\kappa_DR_D-R^{(2)}_h]\bm{\epsilon\cdot\bm{b}}\right.\\ &\left.+\left[\frac{\rho_{22}-\rho_{12}}{\rho_D}\left(\kappa_DR_D^2-R_\delta\right)-R_h^{(2)}\kappa_DR_D\right]\bm{b}\otimes\bm{b}\right]\\
    \bm{\kappa_2}&=\frac{\kappa_D}{1+\kappa_D^2R_D^2(\bm{b}\cdot\bm{b})}\left[[\rho_{11}-\rho_{12}+(R^{(1)}_h\kappa_D R_D)\bm{b}\cdot\bm{b}]\bm{I}+[(\rho_{11}-\rho_{12})\kappa_DR_D-R^{(1)}_h]\bm{\epsilon\cdot\bm{b}}\right.\\ &\left.+\left[\frac{\rho_{11}-\rho_{12}}{\rho_D}\left(\kappa_DR_D^2-R_\delta\right)-R_h^{(1)}\kappa_DR_D\right]\bm{b}\otimes\bm{b}\right]\,.
\end{align*}
Observe in particular that if $\rho_D$ is very small or just one of the $R^{(i)}_h$ are very large then magnetic influence can be substantial.

\section{Analytical examples of systems with constraints}\label{s:4}

In this section, we consider specific examples where a single conductivity tensor can be computed. Subsequent analysis is then performed to study the parameter combinations required to observe significant magnetic field influence. 

\subsection{Site constrained single ion conductor with vacancies}

We first consider the case of a site constrained single-ion conductor; that is, a single ion conductor where unfilled vacancies can be considered a neutral, secondary species. We further assume that $z_1=1$, as would be the case for conducting lithium ions. Assuming that vacancy and ion motion are intrinsically tied to the same, fixed lattice structure where available vacancies have no associated charge, we can assume $\bm{j_1}\approx-\bm{j_2}$. In view of equation \eqref{eq:SIC2}, we can solve for $\nabla\mu_2$ in this case, resulting in
\begin{align}
    \mathbb{J}_s&=-(\bm{\kappa}_{11}-\bm{\kappa}_{12}(\bm{\kappa}_{22}+\bm{\kappa}_{12})^{-1}(\bm{\kappa}_{11}+\bm{\kappa}_{12}))\left(\mathscr{L}_{\bm{v}}\bm{a}+\frac{1}{F}\nabla\mu_1^{em}\right).\label{eq:Jbase}
\end{align}
By plugging \eqref{eq:Kc11}, \eqref{eq:Kc12}, and \eqref{eq:Kc22} into \eqref{eq:Jbase}, straightforward calculation reveals that
\begin{align}
    \mathbb{J}_s=-\frac{\kappa_{\mathrm{eff}}}{1+\left(R_h^{(1)}\kappa_{\mathrm{eff}}\right)^2\bm{b}\cdot\bm{b}}\left[\bm{I}+R_h^{(1)}\kappa_{\mathrm{eff}}\bm{\epsilon}\cdot\bm{b}+\left(R_h^{(1)}\kappa_{\mathrm{eff}}\right)^2\bm{b}\otimes\bm{b}\right]\left(\mathscr{L}_{\bm{v}}\bm{a}+\frac{1}{F}\nabla\mu_1^{em}\right),
\end{align}
where $\kappa_{\mathrm{eff}}=(\rho_{11}-\rho_{12})^{-1}$. Structurally, this equation is identical to the typical single carrier model, \eqref{eq:SIC}, and the Hall coefficient remains unchanged. Therefore, in this constrained system, a small Hall coefficient will imply magnetic field influence on transport is negligible unless $\mu_i^{em}$ itself depends on the magnetic field. See Figure \ref{f:1} for a visual depiction of the influence of the Hall coefficient on conduction, which clearly only affects conduction in directions perpendicular to $\bm{b}$. 

\begin{figure}\centering
\includegraphics[height=0.4\textheight]{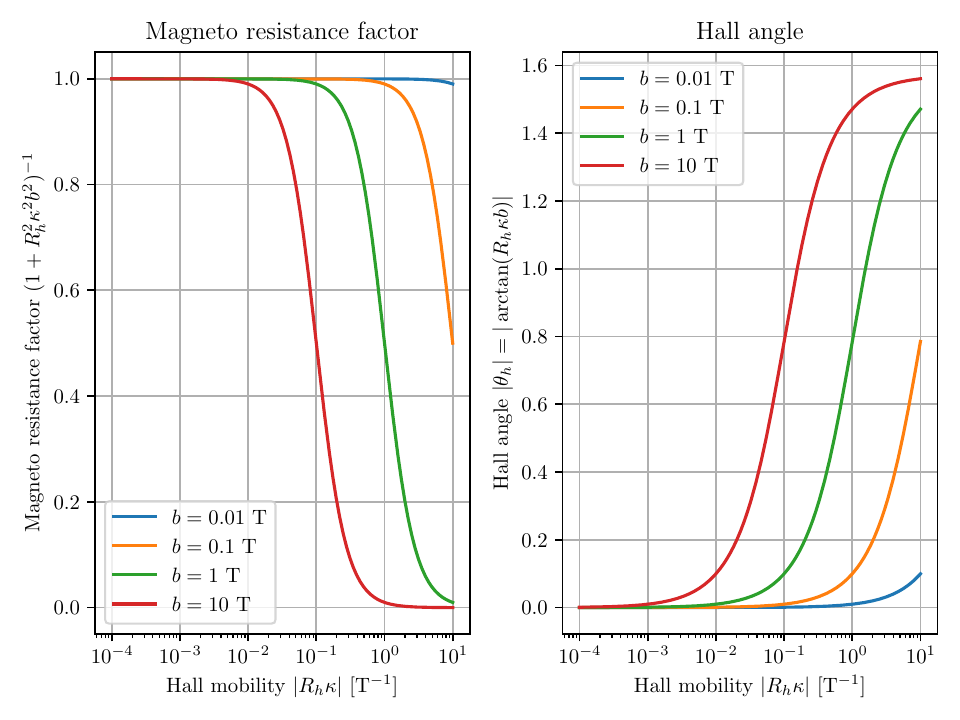}
\caption{Sensitivity of the conductivity to the Hall mobility $R_h\kappa$ at different field strengths $b=|\bm{b}|$. If $\bm{g}_\perp$ is the component of the driving force perpendicular to $\bm{b}$, the magneto resistance factor $({1+R_h^2\kappa^2\bm{b}\cdot\bm{b}})^{-1}$ is the reduction in conductivity in the direction of $\bm{g}_\perp$ (left). The Hall angle $\theta_h=\arctan({R_h^1\kappa|\bm{b}|})$ is the angle between the resultant current in the direction of $\bm{g}_\perp$ and in the direction perpendicular to both $\bm{g}_\perp$ and $\bm{b}$ (right). Note that the Hall coefficient may be negative, so the Hall angle plot is rotated 180 degrees about the origin when Hall mobility is negative.} \label{f:1}
\end{figure}

\subsection{Binary conductor with constrained composition}\label{s:4.2}

We now focus our attention on binary conductors with a compositional constraint forcing $\frac{1}{F{z_1}}\nabla\mu_1^{em}=\frac{1}{F{z_2}} \nabla\mu_2^{em}$. Recalling that $\mu_1^{em}=Fz_1\overline{\phi}+\mu_1$, it is immediate that this case applies to materials with constant composition/activity since $\frac{1}{F{z_1}}\nabla\mu_1^{em}=\nabla\overline{\phi}$. Therefore, an effective conductivity can be determined by simply adding $\bm{\kappa}_1$ and $\bm{\kappa_2}$. That is,
\begin{align}
    \mathbb{J}_s=-\bm{\kappa}_{\mathrm{eff}}\left(\mathscr{L}_{\bm{v}}\bm{a}+\frac{1}{Fz_1}\nabla\mu_1^{em}\right)
\end{align}
with
\begin{align*}
    \bm{\kappa}_{\mathrm{eff}}&=\frac{\kappa_D}{1+\kappa_D^2R_D^2(\bm{b}\cdot\bm{b})}\left\{\left[\rho_{11}-2\rho_{12}+\rho_{22}+\kappa_DR_D\left(R_h^{(1)}+R_{h}^{(2)}\right)(\bm{b}\cdot\bm{b})\right]\right.\bm{I}\\
    &+\left[\kappa_DR_D(\rho_{11}-2\rho_{12}+\rho_{22})-(R_h^{(1)}+R_h^{(2)})\right]\bm{\epsilon}\cdot\bm{b}\\
    &+\left.\left[\frac{\rho_{11}-2\rho_{12}+\rho_{22}}{\rho_{D}}\left(\kappa_DR_D^2-R_\delta\right)-\kappa_DR_D\left(R_h^{(1)}+R_h^{(2)}\right)\right]\bm{b}\otimes\bm{b}\right\}.
\end{align*}
We can then define the parameters
\begin{align*}
    \kappa_{\mathrm{eff}}&=\kappa_D(\rho_{11}-2\rho_{12}+\rho_{22})=\frac{\rho_{11}-2\rho_{12}+\rho_{22}}{\rho_{11}\rho_{22}-\rho_{12}^2-R_h^{(1)}R_h^{(2)}(\bm{b}\cdot\bm{b})}\\
    R_{\mathrm{eff}}&=\frac{R_D}{\rho_{11}-2\rho_{12}+\rho_{22}}=\frac{\rho_{11}R_h^{(2)}+\rho_{22}R_h^{(1)}}{\rho_{11}-2\rho_{12}+\rho_{22}}\\
    h_{\mathrm{mod}}&=\frac{R_h^{(1)}+R_{h}^{(2)}}{\rho_{11}-2\rho_{12}+\rho_{22}}
\end{align*}
%which allow us to more succinctly write
%\begin{align*}
%    \bm{\kappa}_{\mathrm{eff}}&=\frac{\kappa_{\mathrm{eff}}}{1+\kappa_{\mathrm{eff}}^2R_{\mathrm{eff}}^2(\bm{b}\cdot\bm{b})}\left\{\left[1+\kappa_{\mathrm{eff}}R_{\mathrm{eff}}h_{\mathrm{mod}}(\bm{b}\cdot\bm{b})\right]\bm{I}+\left[\kappa_{\mathrm{eff}}R_{\mathrm{eff}}-h_{\mathrm{mod}}\right]\bm{\epsilon}\cdot\bm{b}\right.\\
%    &+\left.\left[\kappa_{\mathrm{eff}}R_{\mathrm{eff}}\left(\frac{R_D}{\rho_D}-h_{\mathrm{mod}}\right)+\frac{R_\delta}{\rho_D}\right]\bm{b}\otimes\bm{b}\right\}.
%\end{align*}
and projections $\bm{P}_{\bm{b}}^\perp=\bm{I}-(\bm{b}\cdot\bm{b})^{-1}\bm{b}\otimes\bm{b}$ and $\bm{P}_{\bm{b}}=(\bm{b}\cdot\bm{b})^{-1}\bm{b}\otimes\bm{b}$, where $\bm{I}=\bm{P}_{\bm{b}}^\perp+\bm{P}_{\bm{b}}$, which enable us to write the final expression for the effective conductivity as
\begin{align}
    \bm{\kappa}_{\mathrm{eff}}&=\frac{\kappa_{\mathrm{eff}}}{1+\kappa_{\mathrm{eff}}^2R_{\mathrm{eff}}^2(\bm{b}\cdot\bm{b})}\left\{\left[1+\kappa_{\mathrm{eff}}R_{\mathrm{eff}}h_{\mathrm{mod}}(\bm{b}\cdot\bm{b})\right]\bm{P}_{\bm{b}}^\perp+\left[\kappa_{\mathrm{eff}}R_{\mathrm{eff}}-h_{\mathrm{mod}}\right]\bm{\epsilon}\cdot\bm{b}\right\}+\kappa_{\mathrm{eff}}^0\bm{P}_{\bm{b}}.\label{eq:keff_con}
\end{align}
Here, $\kappa_{\mathrm{eff}}^0=\rho_D^{-1}$ is the effective conductivity when $\bm{b}=0$; if $R^{(1)}_hR_h^{(2)}\ll \rho_{11}\rho_{22}-\rho_{12}^{2}$, which is likely true even in most cases of near-degenerate transport, $\kappa_{\mathrm{eff}}^0\approx \kappa_{\mathrm{eff}}$ to excellent accuracy for achievable field strengths. It is then clear from the forms of the coefficients that, even if the Hall coefficients in a binary conductor are small, the effective hall parameters $R_{\mathrm{eff}}$ can be non-negligible in a concentrated solution with a near-degenerate $\bm{\rho}$ matrix. 

\begin{figure}\centering
\includegraphics[height=0.6\textheight]{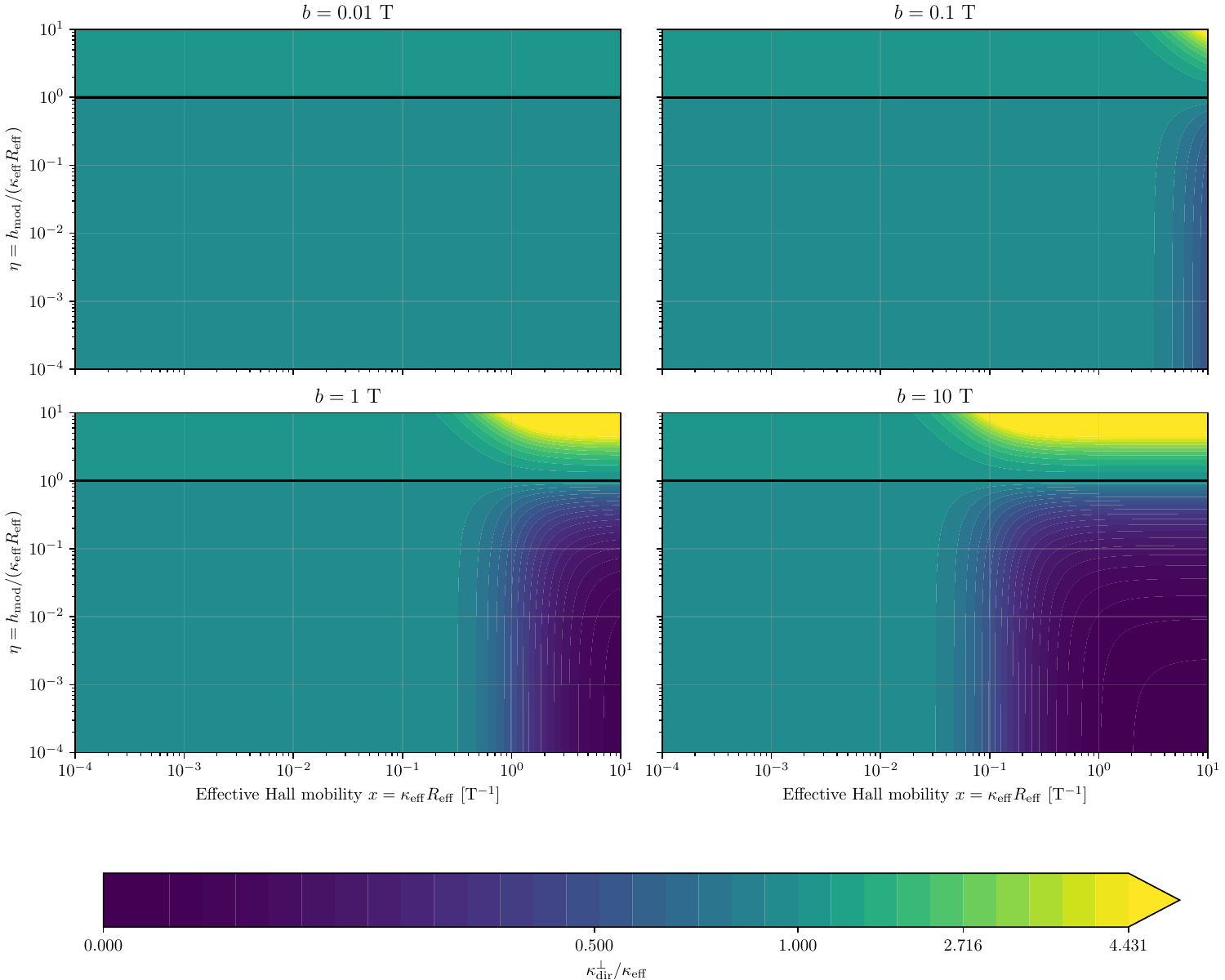}
\caption{Sensitivity of the ratio $\kappa_{\mathrm{eff}}/\kappa_{\mathrm{eff}}^\perp$, where $\kappa_{\mathrm{eff}}^\perp$ is the coefficient of $\bm{P}_b^\perp$ in the conductivity tensor, to the Hall mobility $R_{\mathrm{eff}}\kappa_{\mathrm{eff}}$ 
%\sgadd{(should these be $\kappa_\mathrm{eff}$ and $R_\mathrm{eff}$?)}
and binary coupling $\eta=h_{\mathrm{mod}}/\kappa_{\mathrm{eff}}R_{\mathrm{eff}}$ at different field strengths $b=|\bm{b}|$.} \label{f:2}
\end{figure}
\begin{figure}\centering
\includegraphics[height=0.6\textheight]{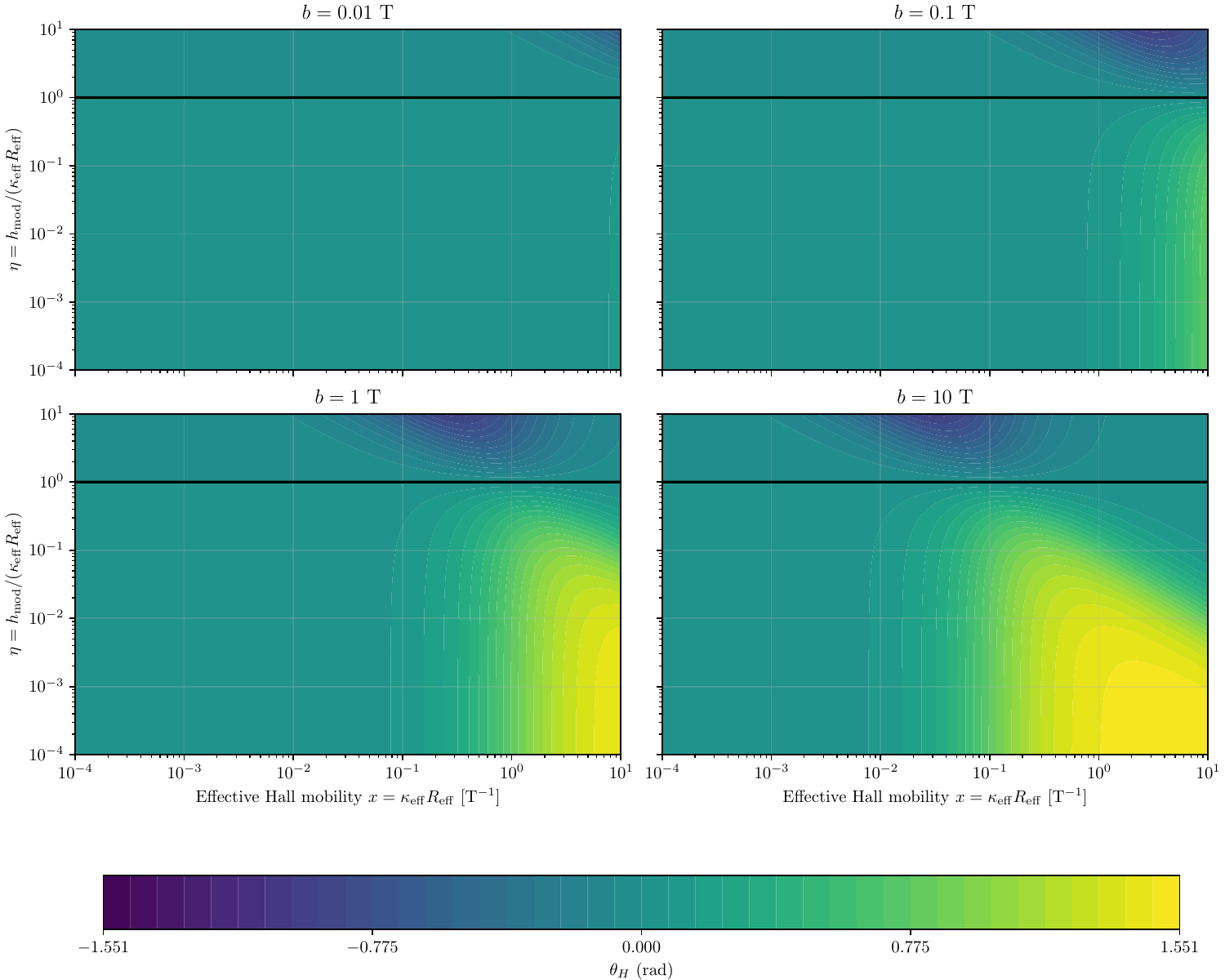}
\caption{Sensitivity of the hall angle to the effective Hall mobility $\kappa_{\mathrm{eff}}R_{\mathrm{eff}}$ and binary coupling $\eta=h_{\mathrm{mod}}/\kappa_{\mathrm{eff}}R_{\mathrm{eff}}$ at different field strengths $b=|\bm{b}|$. This is defined in the same way as before, only now it depends on the ratio $\left[\kappa_{\mathrm{eff}}R_{\mathrm{eff}}-h_{\mathrm{mod}}\right]|\bm{b}|/\left[1+\kappa_{\mathrm{eff}}R_{\mathrm{eff}}h_{\mathrm{mod}}(\bm{b}\cdot\bm{b})\right]$.} \label{f:3}
\end{figure}

\begin{remark}

This example is also applicable to a material where mobile charges occupy complementary positions on the same lattice. For example, following \cite{bai:25} and \cite{kuganathan:21}, it is conceivable that the charges associated with available Li-vacancy sites in LLZO may be mobile independently of lithium via oxygen ion or electron diffusion, enabling behavior similar to that of a binary electrolyte but with electromagnetic-chemical potentials constrained as discussed above. Under this set of assumptions, the conduction model described above could be applicable to LLZO, which provides an explain the results of \cite{zhang:22}. See Figures \ref{f:2} and \ref{f:3} for a visual depiction of regimes where magnetic influence on conduction in a binary solid electrolyte can be significant. As demonstrated by \cite{CARLSON:26}, significant perpendicular reductions in conductivity lead to a desired effect of reduced current concentrations near imperfections.

    It should be cautioned that not all parameter combinations leading to a near degenerate $\bm{\rho}$ tensor will result in a significant observed effect, as it may still be the case that $\kappa_{\mathrm{eff}}R_\mathrm{eff}-h_\mathrm{eff}\approx 0$. Therefore, even if an effective, significant magneto-resistance in solid electrolyte materials is theoretically possible, it may be very rare or ephemeral. Observations such as those of \cite{Kim:25} and \cite{zhang:22} may then be mis-attributing the effect of the magnetic field to the conductivity, while the true effect of the magnetic field on transport may be due to its direct and indirect influence on the driving forces.
\end{remark}

\section{Application to the superionic conductor Pb$_{0.66}$Cd$_{0.34}$F$_2$}\label{s:5}

One of the few direct measurements of non-negligible magnetoresistance in an ionic conductor was recently reported by \cite{Yakushkin:25} for the fluoride conductor Pb$_{0.66}$Cd$_{0.34}$F$_2$. It is observed that the binary conduction model of Section \ref{s:4.2} is capable of fitting the experimental results significantly better than the classical single carrier model of \ref{s:3.1}. Given that Pb$_{0.66}$Cd$_{0.34}$F$_2$ allows for conduction via both interstitial diffusion and vacancy hopping, it is sensible to model the multi-component transport of counter-charged cation sites and interstitial F$^-$ ions. In a direction perpendicular to the magnetic field, it follows from \eqref{eq:keff_con} that the change in resistivity from the zero field value is
\begin{align*}
    \frac{\Delta\rho}{\rho}=\frac{\kappa_\mathrm{eff}^0R_\mathrm{eff}h_\mathrm{mod}\left(\frac{\kappa_\mathrm{eff}^0R_\mathrm{eff}}{h_\mathrm{mod}}-1\right)\bm{b}\cdot\bm{b}}{1+\kappa_\mathrm{eff}^0R_\mathrm{eff}h_\mathrm{mod}\bm{b}\cdot\bm{b}}
\end{align*}
under the additional, reasonable assumption that $R_h^{(1)}R_h^{(2)}\bm{b}\cdot\bm{b}$ is small compared to $\rho_D=\rho_{11}\rho_{22}-\rho_{12}^2$. A fit to the data of \citet[][Fig.~3]{Yakushkin:25} is shown in Figure \ref{f:4}, which clearly agrees with the data much better than the typical quadratic model. Hence, the binary conduction model proposed in this work serves as a potential explanation for the high magnetic field saturation of the magnetoresistance effect observed by \cite{Yakushkin:25} that would not be predicted by \eqref{eq:SIC} alone.

\begin{figure}\centering
\includegraphics[height=0.35\textheight]{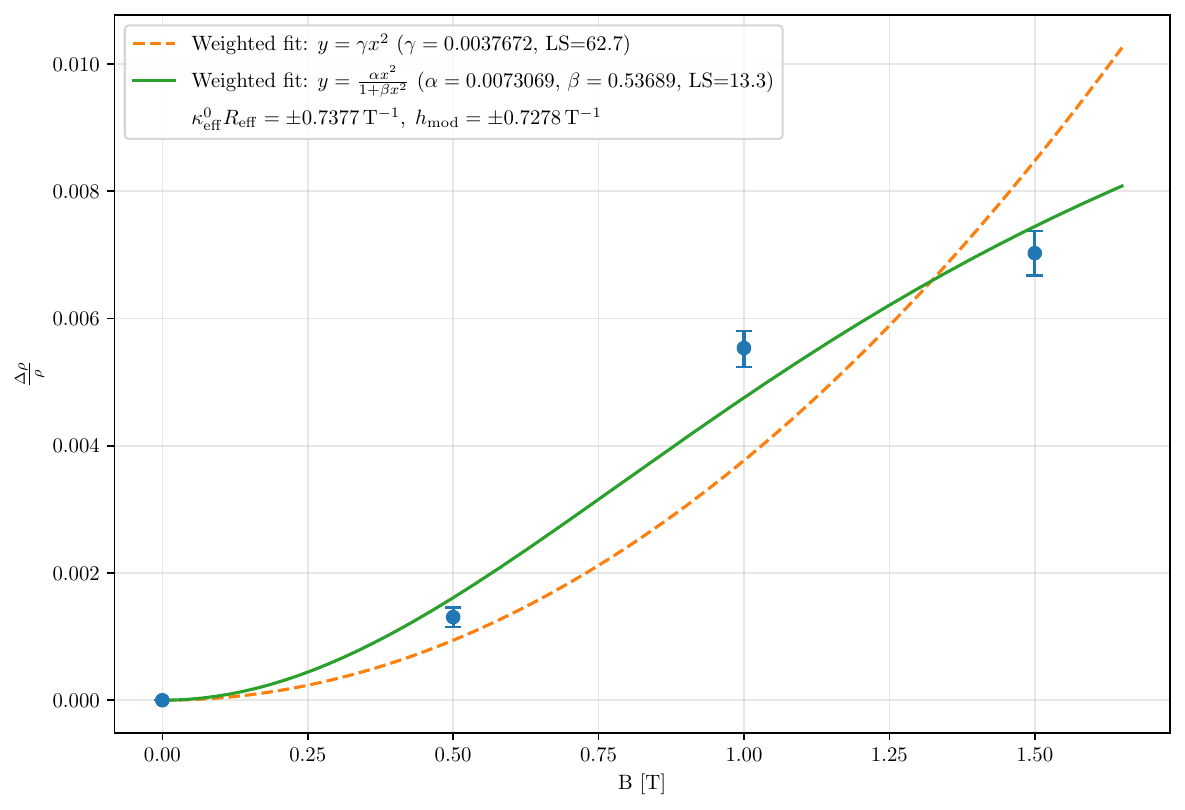}
\caption{Fits to the data from \citet[][Fig.~3]{Yakushkin:25} corresponding to the single-carrier model (dashed) and binary conductor model (solid), with weighted least-squares error given as LS in the legend. Here, $\gamma=\kappa R_h^2$ in the quadratic, single-carrier model, whereas $\alpha=\kappa_\mathrm{eff}^0R_\mathrm{eff}h_\mathrm{mod}\left(\frac{\kappa_\mathrm{eff}^0R_\mathrm{eff}}{h_\mathrm{mod}}-1\right)$ and $\beta=\kappa_\mathrm{eff}^0R_\mathrm{eff}h_\mathrm{mod}$ in the binary conductor model. Explicitly, LS is the sum of the square of the residuals, $y_i-\hat{y}_i$ divided by the experimental error at each point, $\sigma_i$: $LS=\sum_i(\frac{y_i-\hat{y}_i}{\sigma_i})^2$.}
\label{f:4}
\end{figure}

Apart from alignment with the results of \cite{Yakushkin:25}, the models proposed in Sections \ref{s:4.2} and \ref{s:3.3} yield relevant insights for other materials, particularly in cases with null results. For example, a Hall voltage was not detected in AgBr by \cite{liou:90}, despite the known defect mobility and an operationally measurable Hall mobility for similar materials like AgI. This may indicate that \eqref{eq:keff_con} can be used as an effective conductivity for the vacancies and interstitial defects in AgBr with $\kappa_{\mathrm{eff}}R_{\mathrm{eff}}-h_{\mathrm{mod}}\approx 0$. That is, the Hall effect contribution from one carrier likely cancels the contribution from the other in AgBr, while the same clearly cannot be said about AgI.

\begin{remark}
    Note that \cite{Yakushkin:25} observes a slow ramping of the magnetoresistive effect, and an even slower return to the 0 T reference result once the field is removed. However, \cite{Yakushkin:25} does note that temperature effects can be corrected for over the first 10 minutes of the experiment, and this slow de-saturation was observed only after significant heating was induced. Further, the 0.5\, T result is roughly constant over the first 10 minutes. Therefore, it is likely the case that field ramping past 0.5\,T was slow, and that measurements past 10 minutes were polluted by heating induced by Joule heating of the electromagnet. The latter of these two scenarios was implied by \cite{Yakushkin:25}, but the former must be assumed given a lack of operational data regarding the electromagnet. In any case, we presume that temperature effects were correctly accounted for, implying the effect is truly magnetoresitive as claimed by \cite{Yakushkin:25}.
\end{remark}

\section{Conclusions}

As demonstrated in Sections \ref{s:2}, \ref{s:3}, and \ref{s:4}, magnetic fields at achievable field strengths will have an appreciable influence on ion transport in solids via the influence of the Lorentz force alone if one or more of the following cases are applicable:
\begin{enumerate}
    \item the Hall parameter $R_h^{(i)}\bm{\kappa}_i$, where $\bm{\kappa}_i$ is the partial conductivity of species ${i}$, is of an order greater than or equal to 1\,T$^{-1}$. \label{case1}
    \item in the case of multi-component transport of charged species, the global transport tensor $\bm{\rho}$ is near degenerate, but $\kappa_{\mathrm{eff}}$ is well defined and experimentally measurable. \label{case2}
    \item the magnetic field meaningfully effects the modified drag tensors $\bm{M}_{ij}'$ by, for example, influencing the average momentum transfer during collisions or attempt frequency of hopping in a solid lattice. \label{case3}
\end{enumerate}
In cases of multi-component transport where at least one of the above criterion are satisfied, it is also necessary to check that Hall-like effects on carriers of opposite sign do not cancel if they are independently significant. In principle, the potential applicability of Case \ref{case2} in solid materials where coupled transport phenomena are not sufficiently understood may explain the seemingly anomalous response of materials such as LLZO to a magnetic field as demonstrated by \cite{zhang:22}. However, without rigorous experimental validation, it is difficult to definitively determine the true origin of the observed effect. For example, correlation and species interaction in solids may give rise to more magnetic influence on $\bm{M}_{ij}'$ than would be predicted using the approach of \cite{Kinouchi:1988} and \citet[][pp.199-201]{Balakrishnan:2021}.

Although experimental data relating to magnetic influence on ionic transport in solids is sparse, this work can serve as a useful tool in analytically determining the transport properties a material must have in order to determine whether a magnetic field's influence on transport is non-negligible. However, given that the magnetic field can influence transport indirectly by affecting other relevant responses, future work should also focus on a more multiphysics investigation of magnetic effects. Indeed, magnetic influence on the electromagnetic-chemical potential of species, mechanical deformation, or interfacial reaction rates may influence results such as those obtained by \cite{zhang:18} or \cite{Kim:25} more than is currently assumed. For example, in a soft, viscoelastic ionogel solid ionic conductor, a magnetic field may induce deformation that meaningfully influences the electromagnetic-chemical potentials of ions, which would constitute an indirect magnetic field effect on transport.

In any case, the lack of experimental work investigating the magnetoresistance of ions in solid electrolytes, along with existence of experiments that clearly confirm a nontrivial magnetic influence on ionic transport, suggests that many existing materials may already satisfy Cases \ref{case1}, \ref{case2}, or \ref{case3}. Thus, further experimental confirmation and exploration of the effect is critical in understanding both which materials may allow for ionic magnetoresistance and other ways in which a magnetic field can influence transport. In addition, existing continuum models almost never account for electrochemical-magnetic couplings, and those that do neglect other multiphysics couplings, such as stress and temperature, critical in understanding behavior of electrochemical systems \citep[see, e.g.,][]{CARLSON:26}. Therefore, the development of more comprehensive, multiphysics continuum models is clearly necessary to better understand the intricate ways in which an applied magnetic field can effect transport, especially given the evidence that such an effect exists and can be subtle yet substantial.

\bibliographystyle{elsarticle-harv}
\bibliography{FDMEMCT.bib}

\newpage

\backmatter
\begin{appendices}

\section{Algebraic properties of isotropic tensors}\label{a:a}

\begin{claim}[1]
    $\text{span}\{\bm{I},\bm{\epsilon}\cdot\bm{b},\bm{b}\otimes\bm{b}\}$ is a commutative ring \citep[see e.g.,][]{aluffi:09}. \label{lemma:1}
\end{claim}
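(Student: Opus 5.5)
The plan is to show that the span, taken as a subspace of second-order tensors (3-by-3 matrices), is closed under matrix multiplication, contains $\bm{I}$, and that the products are commutative. Associativity, distributivity and the additive group structure are then inherited from the ambient matrix algebra. So everything reduces to computing the products of the three generators. Write $\bm{W}=\bm{\epsilon}\cdot\bm{b}$ for the skew tensor with $\bm{W}\bm{x}=\pm\bm{b}\times\bm{x}$; the sign convention is irrelevant for the argument. Write $\bm{B}=\bm{b}\otimes\bm{b}$.

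The key products are the following. Products with $\bm{I}$ are trivial. For the others:
\begin{align*}
\bm{B}\bm{B}=(\bm{b}\cdot\bm{b})\bm{B},\qquad \bm{W}\bm{B}=\bm{B}\bm{W}=\bm{0},\qquad \bm{W}\bm{W}=\bm{B}-(\bm{b}\cdot\bm{b})\bm{I}.
\end{align*}
The first follows directly from $(\bm{b}\otimes\bm{b})(\bm{b}\otimes\bm{b})=(\bm{b}\cdot\bm{b})\,\bm{b}\otimes\bm{b}$.

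For the second, I would note that $\bm{W}\bm{b}=\pm\bm{b}\times\bm{b}=\bm{0}$. This gives $\bm{W}\bm{B}=(\bm{W}\bm{b})\otimes\bm{b}=\bm{0}$. Skew-symmetry of $\bm{W}$ then gives $\bm{B}\bm{W}=\bm{b}\otimes(\bm{W}^T\bm{b})=-\bm{b}\otimes(\bm{W}\bm{b})=\bm{0}$.

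For the third, I would apply the vector triple product. We have $\bm{W}\bm{W}\bm{x}=\bm{b}\times(\bm{b}\times\bm{x})=(\bm{b}\cdot\bm{x})\bm{b}-(\bm{b}\cdot\bm{b})\bm{x}$, and the sign ambiguity squares away. Equivalently, one can use the index identity $\epsilon_{ikm}\epsilon_{mlj}=\delta_{il}\delta_{kj}-\delta_{ij}\delta_{kl}$.

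All three products lie in the span. By bilinearity, the product of any two elements $\alpha_1\bm{I}+\beta_1\bm{W}+\gamma_1\bm{B}$ and $\alpha_2\bm{I}+\beta_2\bm{W}+\gamma_2\bm{B}$ is therefore in the span. I would write it explicitly:
\begin{align*}
(\alpha_1\alpha_2-\beta_1\beta_2\,\bm{b}\cdot\bm{b})\bm{I}+(\alpha_1\beta_2+\alpha_2\beta_1)\bm{W}+(\alpha_1\gamma_2+\alpha_2\gamma_1+\beta_1\beta_2+\gamma_1\gamma_2\,\bm{b}\cdot\bm{b})\bm{B}.
\end{align*}
This expression is visibly symmetric under swapping the indices $1\leftrightarrow 2$, which gives commutativity. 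It is also what the later lemmas need, since it supplies the multiplication table used for inverting $\bm{D}_\kappa$.

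I do not expect a genuine obstacle. The only point needing care is the sign and convention of $\bm{\epsilon}\cdot\bm{b}$ in $\bm{W}\bm{W}$. I would fix a convention at the start and observe that the identity is insensitive to the overall sign. I would also remark that when $\bm{b}=\bm{0}$ the span degenerates to $\text{span}\{\bm{I}\}$, which is trivially a commutative ring, so no case distinction is needed.
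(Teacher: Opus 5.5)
Your proposal is correct and follows essentially the same route as the paper. Both compute the generator products $(\bm{b}\otimes\bm{b})^2=(\bm{b}\cdot\bm{b})\,\bm{b}\otimes\bm{b}$, $(\bm{\epsilon}\cdot\bm{b})(\bm{b}\otimes\bm{b})=(\bm{b}\otimes\bm{b})(\bm{\epsilon}\cdot\bm{b})=\bm{0}$ (via $\bm{b}\times\bm{b}=\bm{0}$ and skew-symmetry) and $(\bm{\epsilon}\cdot\bm{b})^2=\bm{b}\otimes\bm{b}-(\bm{b}\cdot\bm{b})\bm{I}$ (via the triple product), and then conclude closure and commutativity by bilinearity. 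Your explicit product formula, which makes commutativity visibly symmetric, spells out what the paper leaves implicit.
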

\begin{proof}
Let $\text{span}\{\bm{I},\bm{\epsilon}\cdot\bm{b},\bm{b}\otimes\bm{b}\}=R_b$. Obviously, this is a vector space under tensor addition. It remains to show that the ring operation is commutative and that $R_b$ is closed under tensor single contraction (matrix multiplication). First, observe that
\begin{align*}
    \bm{\epsilon}\cdot\bm{b}(\bm{b}\otimes\bm{b})=(\bm{b}\times\bm{b})\otimes\bm{b}=\bm{0}
\end{align*}
and
\begin{align*}
    (\bm{b}\otimes\bm{b})\bm{\epsilon}\cdot\bm{b}=\bm{b}\otimes(\bm{\epsilon}\cdot\bm{b})^T\bm{b}=-\bm{b}\otimes(\bm{\epsilon}\cdot\bm{b})\bm{b}=\bm{b}\otimes(\bm{b}\times\bm{b})=\bm{0}.
\end{align*}
Second, see that for any vector field $\bm{c}\in\chi(\mathbb{R}^3)$ (the space of vector fields over $\mathbb{R}^3$),
\begin{align*}
    (\bm{\epsilon}\cdot\bm{b})(\bm{\epsilon}\cdot\bm{b})\bm{c}=(\bm{\epsilon}\cdot\bm{b})(\bm{c}\times\bm{b})=(\bm{c}\times\bm{b})\times\bm{b}=(\bm{b}\cdot\bm{c})\bm{b}-(\bm{b}\cdot\bm{b})\bm{c}=(\bm{b}\otimes\bm{b}-(\bm{b}\cdot\bm{b})\bm{I})\bm{c}.
\end{align*}
So, since $\bm{c}$ is arbitrary, we must have that $(\bm{\epsilon}\cdot\bm{b})(\bm{\epsilon}\cdot\bm{b})=\bm{b}\otimes\bm{b}-(\bm{b}\cdot\bm{b})\bm{I}\in R_b$. Obviously, $(\bm{b}\otimes\bm{b})^2=(\bm{b}\cdot\bm{b})(\bm{b}\otimes\bm{b})$ and $\bm{I}$ by definition commutes with all second order tensors. Therefore, $R_b$ is closed under multiplication and is commutative, and is thus a commutative ring. 
\end{proof}

\begin{claim}[2]
    Let $\bm{M}\in\text{span}\{\bm{I},\bm{\epsilon}\cdot\bm{b},\bm{b}\otimes\bm{b}\}$, and suppose $\bm{M}^{-1}$ exists. Then $\bm{M}^{-1}\in \text{span}\{\bm{I},\bm{\epsilon}\cdot\bm{b},\bm{b}\otimes\bm{b}\}$.\label{lemma:2}
\end{claim}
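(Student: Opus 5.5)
The plan is to use Lemma 1 and the Cayley--Hamilton theorem, and to write down an explicit inverse as a backup. Write $R_b=\text{span}\{\bm{I},\bm{\epsilon}\cdot\bm{b},\bm{b}\otimes\bm{b}\}$. By Lemma 1, $R_b$ is closed under multiplication and contains $\bm{I}$, so it is a subalgebra of the second order tensors. Hence every polynomial in $\bm{M}$ lies in $R_b$.

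The Cayley--Hamilton theorem gives
\begin{align*}
\bm{M}^3-I_1\bm{M}^2+I_2\bm{M}-I_3\bm{I}=\bm{0},\qquad I_3=\det\bm{M}.
\end{align*}
Since $\bm{M}^{-1}$ exists, $I_3\neq0$. Multiplying by $\bm{M}^{-1}$ yields
\begin{align*}
\bm{M}^{-1}=I_3^{-1}\left(\bm{M}^2-I_1\bm{M}+I_2\bm{I}\right),
\end{align*}
which is a polynomial in $\bm{M}$ and therefore lies in $R_b$. This argument is the core of the proof and needs no case analysis.

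As a complement, matching the style of the paper, I would also derive the explicit form. Set $\bm{M}=\alpha\bm{I}+\beta\,\bm{\epsilon}\cdot\bm{b}+\gamma\,\bm{b}\otimes\bm{b}$ and $\bm{N}=x\bm{I}+y\,\bm{\epsilon}\cdot\bm{b}+w\,\bm{b}\otimes\bm{b}$. Let $s=\bm{b}\cdot\bm{b}$, and write $\bm{W}=\bm{\epsilon}\cdot\bm{b}$ and $\bm{B}=\bm{b}\otimes\bm{b}$. The product rules from Lemma 1 are
\begin{align*}
\bm{W}\bm{B}=\bm{B}\bm{W}=\bm{0},\qquad \bm{W}^2=\bm{B}-s\bm{I},\qquad \bm{B}^2=s\bm{B}.
\end{align*}
Using them, the condition $\bm{M}\bm{N}=\bm{I}$ reduces to the linear system
\begin{align*}
\alpha x-s\beta y=1,\qquad \alpha y+\beta x=0,\qquad \alpha w+\gamma x+s\gamma w+\beta y=0.
\end{align*}
This is solved by
\begin{align*}
x=\frac{\alpha}{\alpha^2+s\beta^2},\qquad y=\frac{-\beta}{\alpha^2+s\beta^2},\qquad w=-\frac{\gamma x+\beta y}{\alpha+s\gamma}.
\end{align*}

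The only obstacle is confirming that these denominators are nonzero whenever $\bm{M}$ is invertible. For this I would compute the eigenvalues directly. If $\bm{b}=\bm{0}$, then $\bm{M}=\alpha\bm{I}$ and the claim is trivial. If $\bm{b}\neq\bm{0}$, then $\bm{b}$ is an eigenvector with eigenvalue $\alpha+\gamma s$. On the plane $\bm{b}^\perp$, $\bm{M}$ acts as $\alpha\bm{I}+\beta\bm{W}$, and $\bm{W}$ there acts as a rotation by $90^\circ$ scaled by $|\bm{b}|$. Its eigenvalues are therefore $\alpha\pm i\beta|\bm{b}|$, so
\begin{align*}
\det\bm{M}=(\alpha+\gamma s)(\alpha^2+s\beta^2).
\end{align*}
Invertibility thus forces both denominators to be nonzero. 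This confirms the explicit inverse and agrees with the Cayley--Hamilton argument.
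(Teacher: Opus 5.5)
Your proof is correct, and its core is the paper's own argument. Cayley--Hamilton gives $\bm{M}^{-1}=I_3^{-1}(\bm{M}^2-I_1\bm{M}+I_2\bm{I})$, and closure of $\text{span}\{\bm{I},\bm{\epsilon}\cdot\bm{b},\bm{b}\otimes\bm{b}\}$ under multiplication (Lemma 1) puts this polynomial in the span. Your explicit inverse and the factorization $\det\bm{M}=(\alpha+\gamma s)(\alpha^2+s\beta^2)$ are also correct; with $\alpha=\kappa_D^{-1}$, $\beta=-R_D$, $\gamma=R_\delta$ they reproduce the paper's formula for $\bm{D}_\kappa^{-1}$ and show that the denominators are nonzero exactly when $\bm{M}$ is invertible.
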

\begin{proof}
Suppose $\bm{M}\in \text{span}\{\bm{I},\bm{\epsilon}\cdot\bm{b},\bm{b}\otimes\bm{b}\}$. By the Cayley–Hamilton theorem \citep[see e.g.,][]{Mccarthy:33}, $\bm{M}^{-1}$ is a polynomial function of $\bm{M}$ and $\bm{I}$. Therefore $\bm{M}^{-1}\in\text{span}\{\bm{I},\bm{\epsilon}\cdot\bm{b},\bm{b}\otimes\bm{b}\}$ by Claim 1, as we must have $\bm{M}^n\in \text{span}\{\bm{I},\bm{\epsilon}\cdot\bm{b},\bm{b}\otimes\bm{b}\}$ for any $n$.
\end{proof}

\begin{claim}[3]
    Suppose $\bm{Q}$ is an invertible $2\times 2$ block symmetric tensor in $T(\chi(\mathbb{R}^3),\chi(\mathbb{R}^3))\times T(\chi(\mathbb{R}^3),\chi(\mathbb{R}^3))$ (the space of second order tensors fields over $\mathbb{R}^3$, cartesian product with itself). If each block is an element of the same commutative algebra and each block is invertible, then
    \begin{align*}
        \bm{Q}^{-1}=\begin{bmatrix}
            \bm{A} &\bm{B}\\
            \bm{B} & \bm{C}
        \end{bmatrix}^{-1}=\begin{bmatrix}
            (\bm{A}\bm{C}-\bm{B}^2)^{-1}\bm{C} & -(\bm{A}\bm{C}-\bm{B}^2)^{-1}\bm{B}\\
            -(\bm{A}\bm{C}-\bm{B}^2)^{-1}\bm{B} & (\bm{A}\bm{C}-\bm{B}^2)^{-1}\bm{A}
        \end{bmatrix},
    \end{align*}
    provided that each Schur block is invertible also.\label{lemma:3}
\end{claim}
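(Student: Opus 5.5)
The plan is to verify the formula directly, by multiplying $\bm{Q}$ against the proposed inverse and showing the product is the block identity. Write $\bm{S}=\bm{A}\bm{C}-\bm{B}^2$ for the Schur-type block, assumed invertible. Since $\bm{A},\bm{B},\bm{C}$ lie in a common commutative algebra (for instance $\text{span}\{\bm{I},\bm{\epsilon}\cdot\bm{b},\bm{b}\otimes\bm{b}\}$, which is commutative and closed under multiplication by Lemma 1), $\bm{S}$ lies in that algebra too. If the algebra is the span above, Lemma 2 gives $\bm{S}^{-1}$ in the algebra as well. For a general commutative algebra I would argue the same way: by Cayley--Hamilton, $\bm{S}^{-1}$ is a polynomial in $\bm{S}$, so $\bm{S}^{-1}$ commutes with $\bm{A},\bm{B},\bm{C}$. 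This commutation fact is the key ingredient.

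Next I compute $\bm{Q}\bm{P}$, where $\bm{P}$ denotes the candidate inverse, one block at a time, pulling $\bm{S}^{-1}$ to the left by commutativity.
\begin{align*}
(1,1):&\quad \bm{A}\bm{S}^{-1}\bm{C}-\bm{B}\bm{S}^{-1}\bm{B}=\bm{S}^{-1}(\bm{A}\bm{C}-\bm{B}^2)=\bm{I},\\
(1,2):&\quad -\bm{A}\bm{S}^{-1}\bm{B}+\bm{B}\bm{S}^{-1}\bm{A}=\bm{S}^{-1}(\bm{B}\bm{A}-\bm{A}\bm{B})=\bm{0},\\
(2,1):&\quad \bm{B}\bm{S}^{-1}\bm{C}-\bm{C}\bm{S}^{-1}\bm{B}=\bm{0},\\
(2,2):&\quad -\bm{B}\bm{S}^{-1}\bm{B}+\bm{C}\bm{S}^{-1}\bm{A}=\bm{S}^{-1}(\bm{C}\bm{A}-\bm{B}^2)=\bm{I}.
\end{align*}
These four identities show that $\bm{P}$ is a right inverse of $\bm{Q}$. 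Since $\bm{Q}$ is a linear map on the finite-dimensional space $\mathbb{R}^3\times\mathbb{R}^3$ (pointwise), a right inverse is automatically the two-sided inverse. One could also check $\bm{P}\bm{Q}=\bm{I}$ by the symmetric computation.

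The only real obstacle is justifying that $\bm{S}^{-1}$ commutes with the blocks. For the algebra used in the paper this follows from Lemmas 1 and 2. In general it follows because the inverse of an element of a finite-dimensional unital algebra is a polynomial in that element. The hypothesis that each individual block is invertible is not actually needed once $\bm{S}$ is invertible. I would mention this, and also note that the invertibility of $\bm{Q}$ is consistent with the invertibility of $\bm{S}$, since $\det\bm{Q}=\det\bm{S}$ for commuting blocks.
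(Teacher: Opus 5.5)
Your proof is correct, but it takes a different route from the paper.

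The paper begins from the standard Schur-complement block-inversion formula. That formula presupposes that $\bm{A}^{-1}$, $\bm{C}^{-1}$ and both Schur complements $\bm{A}-\bm{B}\bm{C}^{-1}\bm{B}$ and $\bm{C}-\bm{B}\bm{A}^{-1}\bm{B}$ exist. The paper then uses commutativity to rewrite each block, e.g.\ $(\bm{A}-\bm{B}\bm{C}^{-1}\bm{B})^{-1}=[(\bm{A}\bm{C}-\bm{B}^2)\bm{C}^{-1}]^{-1}=(\bm{A}\bm{C}-\bm{B}^2)^{-1}\bm{C}$. This derives the formula rather than checking it, and it is why the lemma carries its block-wise and Schur invertibility hypotheses.

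Your direct verification of $\bm{Q}\bm{P}=\bm{I}$ needs only that $\bm{S}=\bm{A}\bm{C}-\bm{B}^2$ is invertible and commutes with the blocks. It is therefore more elementary, and it shows those extra hypotheses are superfluous. This matters for the paper's own use of the lemma. In the two-species models the off-diagonal block is $\rho_{12}\bm{I}$, and your version still covers the non-interacting limit $\rho_{12}=0$, which the lemma as stated (requiring $\bm{B}$ invertible) excludes.

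Two small refinements:
\begin{enumerate}
\item The commutation of $\bm{S}^{-1}$ with the blocks does not need Cayley--Hamilton. From $\bm{X}\bm{S}=\bm{S}\bm{X}$ you get immediately $\bm{X}\bm{S}^{-1}=\bm{S}^{-1}(\bm{S}\bm{X})\bm{S}^{-1}=\bm{S}^{-1}\bm{X}$.
\item Your determinant remark is better stated as an implication. Since $\det\bm{Q}=\det\bm{S}$ for commuting blocks, invertibility of $\bm{S}$ follows from that of $\bm{Q}$. Alternatively, under the lemma's hypotheses it follows from $\bm{S}=(\bm{A}-\bm{B}\bm{C}^{-1}\bm{B})\bm{C}$. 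Either way it is not an additional assumption.
\end{enumerate}
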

\begin{proof}
    By the usual Schur complement inversion formula \citep[see, e.g.,][p.~14]{Zhang:05},
    \begin{align*}
        \begin{bmatrix}
            \bm{A} &\bm{B}\\
            \bm{B} & \bm{C}
        \end{bmatrix}^{-1}&=\begin{bmatrix}
            (\bm{A}-\bm{B}\bm{C}^{-1}\bm{B})^{-1} &-(\bm{A}-\bm{B}\bm{C}^{-1}\bm{B})^{-1}\bm{B}\bm{C}^{-1}\\
            -(\bm{C}-\bm{B}\bm{A}^{-1}\bm{B})^{-1}\bm{B}\bm{A}^{-1}& (\bm{C}-\bm{B}\bm{A}^{-1}\bm{B})^{-1}
        \end{bmatrix}\\
        &=\begin{bmatrix}
            [(\bm{A}\bm{C}-\bm{B}{^2})\bm{C}^{-1}]^{-1} &-(\bm{A}\bm{C}-\bm{B}^2)^{-1}\bm{B}\\
            -(\bm{A}\bm{C}-\bm{B}^2)^{-1}\bm{B}& [(\bm{A}\bm{C}-\bm{B}^2)\bm{A}^{-1}]^{-1}
        \end{bmatrix}\\
        &=\begin{bmatrix}
            (\bm{A}\bm{C}-\bm{B}^2)^{-1}\bm{C} & -(\bm{A}\bm{C}-\bm{B}^2)^{-1}\bm{B}\\
            -(\bm{A}\bm{C}-\bm{B}^2)^{-1}\bm{B} & (\bm{A}\bm{C}-\bm{B}^2)^{-1}\bm{A}
        \end{bmatrix}
    \end{align*}
    since the tensor multiplication commutes in this case and $\bm{A},\,\bm{B},\,\bm{C},\,(\bm{A}-\bm{B}\bm{C}^{-1}\bm{B}),\,(\bm{C}-\bm{B}\bm{A}^{-1}\bm{B})$ are invertible. Observe in particular that if the blocks commute with one another in this way then the inverse is precisely analogous to the inverse of a $2\times2$ matrix over $\mathbb{R}$.
\end{proof}

\end{appendices}

\end{document}